\definecolor{pcgreen}{rgb}{0.0, 0.62, 0.24}
\title{On Computing Pareto Optimal Paths in Weighted Time-Dependent Networks}
\author[1]{Filippo Brunelli}
\affil[1]{Inria, Université de Paris, CNRS, IRIF, F-75013 Paris, France (\texttt{filippo.brunelli@inria.fr})}
\author[2]{Pierluigi Crescenzi}
\affil[2]{Gran Sasso Science Institute, 67100 L'Aquila, Italy (\texttt{pierluigi.crescenzi@gssi.it})}
\author[3]{Laurent Viennot}
\affil[3]{Inria, Université de Paris, CNRS, IRIF, F-75013 Paris, France (\texttt{laurent.viennot@inria.fr})}
\newcommand{\tdn}{Time-dependent network (TDN)}
\newcommand{\pltdn}{Piecewise linear TDN}
\newcommand{\cdtdn}{Constant-delay TDN}
\newcommand{\patdn}{Point-availability TDN}
\newcommand{\utdn}{Uniform TDN}
\newcommand{\fls}{Finite link stream}
\newcommand{\tn}{Temporal network}
\newcommand{\thor}{\mathbb{T}}
\newcommand{\cost}{\mathbb{C}}
\newcommand{\poset}[3]{\mathbb{PO}_{#1,#2}(#3)}
\newcommand{\tpath}[1]{\mathbb{#1}}
\newcommand{\allpaths}[2]{\mathcal{P}_{#1,#2}}
\newcommand{\sd}[2]{#1\textnormal{-}#2}
\newcommand{\tindedge}[1]{(u_{#1},v_{#1},\tau_{#1},\delta_{#1})}
\newcommand{\tedge}{(u,v,\tau,\delta)}
\newcommand{\problem}[2]{\medskip\fbox{\parbox{0.92\textwidth}{\textbf{#1.} #2}}\medskip}
\newcommand*{\centerfloat}{%
  \parindent \z@
  \leftskip \z@ \@plus 1fil \@minus \textwidth
  \rightskip\leftskip
  \parfillskip \z@skip}
\newcommand{\updateps}{\textsc{update\_ps}\xspace}
\newtheorem{theorem}{Theorem}
\newtheorem{lemma}{Lemma}
\def\Box{\hbox{\hskip 1pt \vrule width 4pt height 8pt depth 1.5pt \hskip 1pt}}
\newenvironment{proof}{\medskip\noindent\textbf{Proof.}}{{}\hfill$\Box$\\}
\begin{document}

\maketitle

\begin{abstract}
A weighted point-availability time-dependent network is a list of temporal edges, where each temporal edge has an appearing time value, a travel time value, and a cost value. In this paper we consider the single source Pareto problem in weighted point-availability time-dependent networks, which consists of computing, for any destination $d$, all Pareto optimal pairs $(t,c)$, where $t$ and $c$ are the arrival time and the cost of a path from $s$ to $d$, respectively (a pair $(t,c)$ is Pareto optimal if there is no path with arrival time smaller than $t$ and cost no worse than $c$ or arrival time no greater than $t$ and better cost). We design and analyse a general algorithm for solving this problem, whose time complexity is $O(M\log P)$, where $M$ is the number of temporal edges and $P$ is the maximum number of Pareto optimal pairs for each node of the network. This complexity significantly improves the time complexity of the previously known solution. Our algorithm can be used to solve several different minimum cost path problems in weighted point-availability time-dependent networks with a vast variety of cost definitions, and it can be easily modified in order to deal with the single destination Pareto problem. All our results apply to directed networks, but they can be easily adapted to undirected networks with no edges with zero travel time.
\end{abstract}

\section{Introduction}
\label{sec:introduction}

A \textit{time-dependent network} (in short, TDN) is a graph $G=(V,E)$ in which the delay or travel time of each edge changes over time~\cite{Cooke1966,Dreyfus1969,Orda1990,Casteigts2012}. Typically, the dependence on time of the delay is specified by associating to each edge $e=(u,v)\in E\subseteq V\times V$ a function $\alpha_e(t)$ which indicates, for each time $t$, the \textit{arrival time} in $v$ when the edge is traversed starting from $u$ at time $t$ (see, for example, the time-dependent network shown in Figure~\ref{fig:timedependentnetwork}). Note that, once the arrival time function is specified, the \textit{delay} (or travel time) of an edge $e$ at time $t$ can be easily computed as $\delta_e(t)=\alpha_e(t)-t$ (since we cannot yet travel back in time, this implies that $\alpha_e(t)$ has to be no smaller than $t$). Equivalently, arrival time functions can be easily obtained from delay functions. 
This general definition has been refined in several different ways in the last $30$ years, by assuming different properties of the edge arrival time functions. In particular, we can identify the following hierarchy of TDN models, where each model encompasses the next one.
\begin{description}
    \item[Piecewise linear TDN] For each edge $e$, the function $\alpha_e$ is piecewise linear, like, for example, the functions $\alpha_{e_2}(t)$ and $\alpha_{e_3}(t)$ in Figure~\ref{fig:timedependentnetwork}~\cite{Foschini2014}.
    
    \item[Constant-delay TDN] These are piecewise linear TDNs in which, for each edge, the slope of all linear segments of the corresponding arrival time function is equal to $1$, like, for example, the function $\alpha_{e_3}(t)$ in Figure~\ref{fig:timedependentnetwork}~\cite{Dehne2012}.

    \item[Point-availability TDN] These are constant-delay TDNs in which the domain of the arrival time functions is a finite subset $\thor$ of the set of real numbers~\cite{Wu2016}. A commonly used representation of a point-availability TDN simply consists in listing all the quadruples $\tedge$, such that the arrival time function of the edge $(u,v)$ at time $\tau$ is equal to $\tau+\delta$ (see the two commonly used visualizations of such representation shown in Figure~\ref{fig:pointavailabilitytdn}).

    \item[Uniform TDN] These are point-availability TDNs in which the delay is the same value $\delta$ for all edges and all time instants. For example, temporal graphs or networks~\cite{Michail2016,Crescenzi2019} are uniform TDNs in which $\delta=1$ and $\thor\subseteq\mathbb{N}$, while finite link streams~\cite{Latapy2018} are uniform TDNs in which $\delta=0$.
\end{description}

\begin{figure}
\centerfloat
\SetVertexStyle[FillColor=white]
\SetEdgeStyle[Color=black]
\begin{tikzpicture}[scale=0.8, every node/.style={scale=0.8}]
  \Vertex[x=0,y=0,label={$u_1$}]{1}
  \Vertex[x=4,y=0,label={$u_3$}]{3}
  \Vertex[x=2,y=2,label={$u_2$}]{2}
  \Edge[label={$e_1$},position={above left=0.5mm},Direct](1)(2)
  \Edge[label={$e_2$},position={below=0.5mm},Direct](1)(3)
  \Edge[label={$e_3$},position={above right=0.5mm},Direct](3)(2)
  \begin{scope}[xshift=5cm]
    \begin{axis}[axis x line=middle,axis y line=middle,grid=major,width=6cm,height=6cm,grid style={dashed, gray!30},
        xmin=0,xmax=5.5,ymin=0,ymax=9.5,xlabel=$t$,ylabel={$\alpha_{e_1}(t)$},tick align=outside,every axis x label/.style={at={(ticklabel* cs:1.05)},anchor=west},every axis y label/.style={at={(ticklabel* cs:1.05)},anchor=south}]
    % plot the function
    \addplot[domain=0:5,dashed,samples=100] {x};
    \addplot[domain=0:5,ultra thick,samples=100] {-x*x+6*x};
    \end{axis}
  \end{scope}
  \begin{scope}[xshift=11cm]
    \begin{axis}[axis x line=middle,axis y line=middle,grid=major,width=6cm,height=6cm,grid style={dashed, gray!30},
        xmin=0,xmax=5.5,ymin=0,ymax=5.5,xlabel=$t$,ylabel={$\alpha_{e_2}(t)$},tick align=outside,enlargelimits=false,every axis x label/.style={at={(ticklabel* cs:1.05)},anchor=west},every axis y label/.style={at={(ticklabel* cs:1.05)},anchor=south}]
    % plot the function
    \addplot[domain=0:5,dashed,samples=100] {x};
    \addplot[domain=0:1, ultra thick,samples=100] {2*x+1};
    \addplot[domain=1:2, ultra thick,samples=100] {0.75*x+1.25};
%    \addplot[domain=2:3, dashed, ultra thick,samples=100] {4};
    \addplot[domain=3:5, ultra thick,samples=100] {0.5*x+2.5};
    \end{axis}
  \end{scope}
  \begin{scope}[xshift=17cm]
    \begin{axis}[axis x line=middle,axis y line=middle,grid=major,width=6cm,height=6cm,grid style={dashed, gray!30},
        xmin=0,xmax=5.5,ymin=0,ymax=5.5,xlabel=$t$,ylabel={$\alpha_{e_3}(t)$},tick align=outside,enlargelimits=false,every axis x label/.style={at={(ticklabel* cs:1.05)},anchor=west},every axis y label/.style={at={(ticklabel* cs:1.05)},anchor=south}]
    % plot the function
    \addplot[domain=0:5,dashed,samples=100] {x};
    \addplot[domain=0:1, ultra thick,samples=100] {x+0.5};
    \addplot[domain=1:2, ultra thick,samples=100] {x+0.1};
%    \addplot[domain=2:3, dashed, ultra thick,samples=100] {3.5};
    \addplot[domain=3:5, ultra thick,samples=100] {x+0.5};
    \end{axis}
  \end{scope}
\end{tikzpicture}
\caption{An example of time-dependent network. In this example, $\alpha_{e_1}(t)=-t^2+6t$. The function $\alpha_{e_2}(t)$ is piecewise linear (with segments $2t+1$ in $[0,1]$, $\frac{3}{4}t+\frac{5}{4}$ in $(1,2]$, and $\frac{1}{2}t+\frac{5}{2}$ in $[3,5]$), while the function $\alpha_{e_3}(t)$ is piecewise linear and constant-delay (with delay $\frac{1}{2}$ in $[0,1]$, $\frac{1}{10}$ in $(1,2]$, and $\frac{1}{2}$ in $[3,5]$).}
\label{fig:timedependentnetwork}
\end{figure}
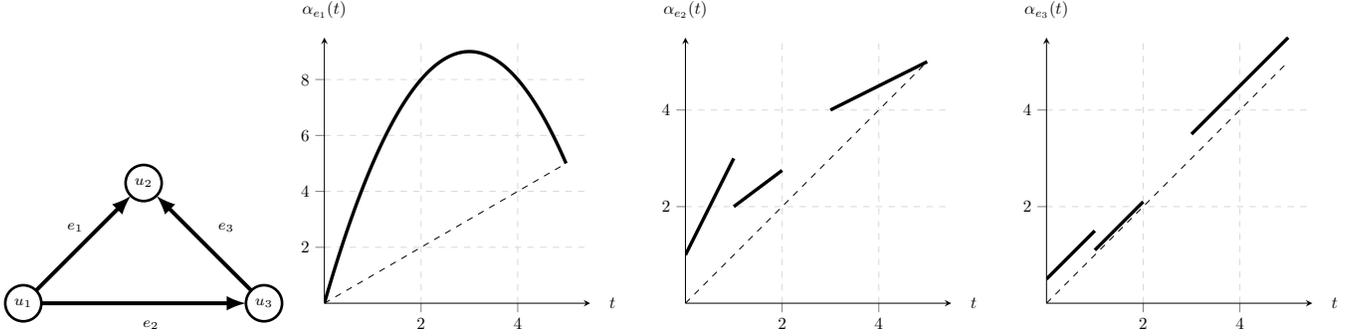

\begin{figure}[b]
\centerfloat
\SetVertexStyle[FillColor=white]
\SetEdgeStyle[Color=black]
\begin{tikzpicture}[x=5pt,y=5pt]
  \Vertex[x=0,y=0,label={$u_1$}]{1}
  \Vertex[x=5,y=0,label=$u_2$]{2}
  \Vertex[x=5,y=4,label=$u_3$]{3}
  \Vertex[x=0,y=4,label=$u_4$]{4}
  \Vertex[x=2.5,y=2,label=$u_5$]{5}
  % 1
  \Edge[label={$9|4$},Direct](1)(2)
  \Edge[label={$1|2$},bend=30,Direct](1)(4)
  \Edge[label={$6|2$},Direct](1)(4)
  \Edge[label={$7|3$},Direct](1)(5)
  % 2
  \Edge[label={$2|1$},Direct](2)(3)
  \Edge[label={$5|2$},Direct,bend=-20,distance=0.2](2)(4)
  \Edge[label={$11|1$},Direct](2)(5)
  % 3
  \Edge[label={$9|2$},Direct](3)(4)
  \Edge[label={$3|1$},Direct](3)(5)
  \Edge[label={$12|1$},Direct,bend=-30](3)(5)
  % 4
  \Edge[label={$3|1$},Direct](4)(5)
  \Edge[label={$8|1$},Direct,bend=-30](4)(5)
\end{tikzpicture}
\qquad
\begin{tikzpicture}[x=10pt,y=10pt]
\tikzset{vertex/.style={circle,fill=black,inner sep=0pt,minimum size=3pt},arc/.style={-,line width=.6pt}}
\node at (0,-4) {};
\draw [thick,dashed] (0,0) -- (26,0);
\draw [thick,dashed] (0,1) -- (26,1);
\draw [thick,dashed] (0,2) -- (26,2);
\draw [thick,dashed] (0,3) -- (26,3);
\draw [thick,dashed] (0,4) -- (26,4);
\node at (-1,4) {$u_1$};
\node at (-1,3) {$u_2$};
\node at (-1,2) {$u_3$};
\node at (-1,1) {$u_4$};
\node at (-1,0) {$u_5$};
\node at (1,5) {$1$};
\node at (3,5) {$2$};
\node at (5,5) {$3$};
\node at (7,5) {$4$};
\node at (9,5) {$5$};
\node at (11,5) {$6$};
\node at (13,5) {$7$};
\node at (15,5) {$8$};
\node at (17,5) {$9$};
\node at (19,5) {$10$};
\node at (21,5) {$11$};
\node at (23,5) {$12$};
\node at (25,5) {$13$};
\node[vertex] (e1u) at (1,4) {};
\node[vertex] (e1v) at (5,1) {};
\node[vertex] (e2u) at (3,3) {};
\node[vertex] (e2v) at (5,2) {};
\node[vertex] (e3u) at (5,1) {};
\node[vertex] (e3v) at (7,0) {};
\node[vertex] (e4u) at (7,2) {};
\node[vertex] (e4v) at (9,0) {};
\node[vertex] (e5u) at (9,3) {};
\node[vertex] (e5v) at (13,1) {};
\node[vertex] (e6u) at (11,4) {};
\node[vertex] (e6v) at (15,1) {};
\node[vertex] (e7u) at (13,4) {};
\node[vertex] (e7v) at (19,0) {};
\node[vertex] (e8u) at (15,1) {};
\node[vertex] (e8v) at (17,0) {};
\node[vertex] (e9u) at (17,2) {};
\node[vertex] (e9v) at (21,1) {};
\node[vertex] (e10u) at (17,4) {};
\node[vertex] (e10v) at (25,3) {};
\node[vertex] (e11u) at (21,3) {};
\node[vertex] (e11v) at (23,0) {};
\node[vertex] (e12u) at (23,2) {};
\node[vertex] (e12v) at (25,0) {};
\draw[arc,->] (e1u) -- (e1v);
\draw[arc,->] (e2u) -- (e2v);
\draw[arc,->] (e3u) -- (e3v);
\draw[arc,->] (e4u) -- (e4v);
\draw[arc,->] (e5u) -- (e5v);
\draw[arc,->] (e6u) -- (e6v);
\draw[arc,->] (e7u) -- (e7v);
\draw[arc,->] (e8u) -- (e8v);
\draw[arc,->] (e9u) -- (e9v);
\draw[arc,->] (e10u) -- (e10v);
\draw[arc,->] (e11u) -- (e11v);
\draw[arc,->] (e12u) -- (e12v);
\end{tikzpicture}
\caption{Two visualizations of an example of point-availability TDN. In the visualization on the left, each edge is labeled with its availibility time and its delay. In the second visualization, the delay of each edge can be computed as the difference between its starting time and its arrival time.}
\label{fig:pointavailabilitytdn}
\end{figure}
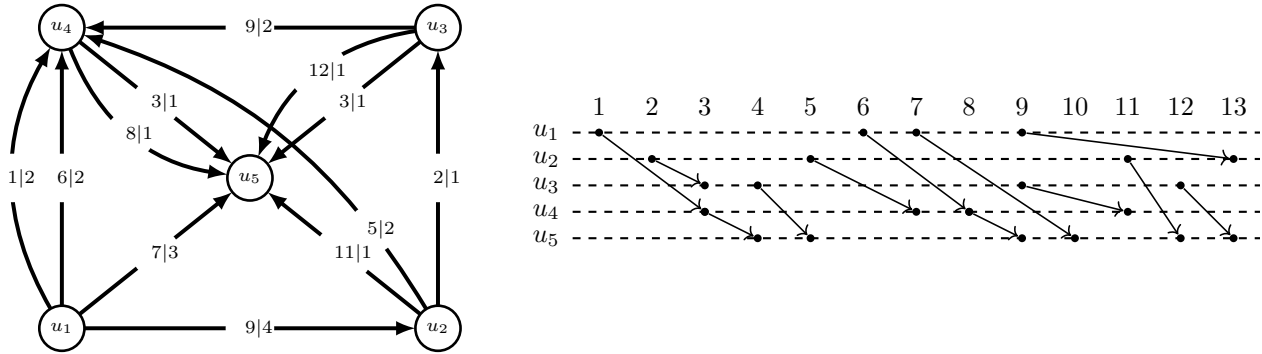

Besides the above constraints on the arrival time functions, it is also common practice to distinguish between TDNs which satisfy the FIFO property and TDNs which do not satisfy this property~\cite{Dean2004}. The \textit{FIFO property} states that, for every edge $e=(u,v)$, a later starting time at $u$ results in a later (or equal) arrival time at $v$. In other words, for each edge, the arrival time function of the edge itself is non-decreasing. Note that any uniform TDN satisfies this property, while, in general, this is not true for the other levels of the above hierarchy (see, for example, the arrival time functions of the TDN shown in Figure~\ref{fig:timedependentnetwork}).

One of the basic notions of TDNs is the definition of \textit{path}, which has to satisfy, besides the typical constraints of a path in a graph, some natural time constraints. In particular, a path $\tpath P=\langle (e_1,t_1),\ldots,(e_k,t_k) \rangle$ in a TDN, starting from a node $u$ at time $t_1$ and arriving to a node $v$ at time $t_{k+1}=\alpha_{e_k}(t_k)$, is such that each edge $e_i=(u_i,v_i)$ in the path is available at some associated time instant $t_i$ following the arrival time of the path in $u_i$ (in particular $t_0\le t_1$) and the arrival time in $v$ is no later than $t_{k+1}$. By referring to the TDN of Figure~\ref{fig:timedependentnetwork}, we have that, for example, $\langle (e_2,4),(e_3,\frac{9}{2})\rangle$ is a path from $u_1$ to $u_2$ starting at time $4$ and arriving at time $5$ (note that this path is faster than directly traversing the edge $e_1$ at time $4$, which would have taken to $u_2$ at time $8$).

Concerning the definition of paths, another distinction among TDNs is made by applying different waiting policies~\cite{Orda1990}. In this paper, we will focus on TDNs with an \textit{unrestricted waiting policy}, which allows us to wait at a node, as much as it is necessary, until an edge exiting from the node becomes available. For example, by referring to the TDN of Figure~\ref{fig:timedependentnetwork} (which does not satisfy the FIFO property), if we arrive at node $u_1$ at time $2.5$, we can wait until time $5$ (respectively, $3$) in order to traverse the edge $e_1$ (respectively, $e_2$): in both cases, the arrival time at the other extreme of the edge would be smaller than traversing the edge at time $2.5$ (actually, in the case of $e_2$, the edge is not even available at time $2.5$).

A very well-studied problem on TDNs is the so-called (single-source) \textit{profile problem}, that is, \textit{given a TDN and given a source $s$, compute, for each destination $u$, its profile function which associates to any time $t$, the earliest arrival time in $u$, if we start from $s$ at time $t$}. The above hierarchy of TDNs (see the left part of Figure~\ref{fig:hierarchy}) induces different complexities for the profile problem. In order to describe these differences, let us first introduce some complexity parameters. The size of a TDN can be expressed in terms of the number $n=|V|$ of nodes, the number  $m=|E|$ of edges, and the sum of arrival time function sizes $S=\sum_{e\in E}|\alpha_e|$, where the \textit{size} $|\alpha_e|$ of $\alpha_e$ is the number of parameters required to store a representation of $\alpha_e$.

\begin{figure}
\centerfloat
\raisebox{-.5\height}{\tikzset{>=stealth',did/.style={rectangle,draw=none,fill=none,inner sep=0cm}}
\begin{tikzpicture}[scale=1,every node/.style={draw, rectangle, align=center,font=\footnotesize},label/.style={did, inner sep=1pt,midway,above,sloped,font=\tiny}]
\node[fill=white] (tdn) at (0,4) {\tdn};
\node[fill=white] (pltdn) at (0,3) {\pltdn};
\node[fill=white] (cdtdn) at (0,2) {\cdtdn};
\node[fill=white] (patdn) at (0,1) {\patdn};
\node[fill=white] (utdn) at (0,0) {\utdn};
\node[fill=white] (fls) at (-1.5,-1) {\fls};
\node[fill=white] (tn) at (1.5,-1) {\tn};
\draw[->, thick] (fls) -- node[draw=none,left] {$\delta=0$} (utdn);
\draw[->, thick] (tn) -- node[draw=none,right] {$\delta=1$} (utdn);
\draw[->, thick] (utdn) -- node[draw=none,right] {$\alpha_e(t)-t=\delta$ for each edge} (patdn);
\draw[->, thick] (patdn) -- node[draw=none,right] {$|\mathbb{T}|<\infty$} (cdtdn);
\draw[->, thick] (cdtdn) -- node[draw=none,right] {$\alpha_e(t)-t$ constant in each interval} (pltdn);
\draw[->, thick] (pltdn) -- node[draw=none,right] {$\alpha_e(t)$ piecewise linear} (tdn);
\end{tikzpicture}}\qquad\begin{tabular}{||c||>{\centering}m{2.5cm}|>{\centering\arraybackslash}m{3cm}||}
\hline
\multirow{2}{*}{\textbf{TDN class}} & \multicolumn{2}{c||}{\textbf{Complexity profile problem}}\\
\cline{2-3}
 & \textit{Time} & \textit{Profile size $P$}\\
%\hline
%TDN & ?? $\Omega(n^{\log n})$  \cite{Foschini2014} & $n^{\Omega(\log n)}$\\
\hline
\makecell{Piecewise-\\ linear} & \makecell{$O(nPm)$\\ \cite{Orda1990}}
%$n\times O((P+S)(m+n\log n))$ \cite{Dehne2012}
& \makecell{$P=S n^{O(\log n)}$\\ can be $n^{\Omega(\log n)}$ \cite{Foschini2014}}\\
\hline
\makecell{Constant-\\ delay} & \makecell{$O(S(m+n\log n))$\\ \cite{Dehne2012}} & \\
\cline{1-2}
\makecell{Point-\\ availability} & \makecell{$O(S\log P)$\\ implicit in\\ \cite{Dibbelt2013,Dibbelt2018,Wu2014Path,Wu2016}} & {$P=O(S)$}\\
%\hline
%Uniform TDN & ?? &\\
%\hline
%Integral bounded-delay TDN & $O(S)$ (this paper) &\\
\cline{1-2}
\makecell{Temporal\\ network} & \makecell{$O(S)$\\ implicit in\\ \cite{Kossinets2008Structure}} & \\
\hline
\hline
\end{tabular}

%Parameters: $n$ nodes, $m$ edges, $S$ temporal edges (total number of availability intervals), $P$ maximum profile size.
\caption{The hierarchy of time-dependent networks (left) and the corresponding complexities of the profile problem (right). We use $n=|V|$, $m=|E|$, $S=\sum_{e\in E}|\alpha_e|$, $|\alpha_e|$ is the number of parameters required to store a representation of $\alpha_e$, and $P$ denotes the maximum size of a profile.}
\label{fig:hierarchy}
\end{figure}

The complexity of the profile problem increases as we go higher in the hierarchy, as described in the right part of Figure~\ref{fig:hierarchy} for non-zero delays and unrestricted waiting policy. The profile problem was first studied in general TDNs in~\cite{Orda1990} with an algorithm whose complexity depends on the size of the arrival time functions used to represent profiles. In the piecewise-linear case, each profile function is also piecewise linear and its size can be defined similarly to arrival time functions. Unfortunately the maximum size $P$ of a profile function can be super-polynomial as shown in~\cite{Foschini2014} in the case of piecewise-linear TDNs. A first gap occurs for constant-delay TDNs where profile size is $O(S)$~\cite{Dehne2012} and the best algorithm (as far as we know) is quadratic~\cite{Dehne2012}. A second important gap occurs for point-availability TDNs. Surprisingly, it has received little attention in the literature and we could not find any work explicitly stating its complexity. The profile version of the CSA algorithm~\cite{Dibbelt2013,Dibbelt2018} solves the problem in a public transit network model which is more sophisticated and for which the complexity is not stated. The algorithm consists in a single scan of what we define as temporal edges later and an $O(S\log P)$ complexity can easily be inferred. A similar algorithm is proposed in~\cite{Wu2014Path,Wu2016} for fastest path computation and could be easily transformed into a profile algorithm. A linear time algorithm can be inferred from the vector clock algorithm in \cite{Kossinets2008Structure} for temporal networks. 
%\comLV{Do we say something like : The claimed complexity of the profile problem in point-availability TDNs will be explicitly  obtained as a by-product of our work.}

%\comPC{to be rewritten. ``problem of computing non-dominated temporal paths when considering departure time and arrival time, two very natural parameters associated to any temporal path. The complexity of computing the set of Pareto optimal pairs $(t_{dep},t_{arr})$ ranges from super-polynomial for piecewise linear~\cite{foschini-fixme} to linear for pure availability times. It is $O(S(m+ n\log n))$ for availability intervals~\cite{dehne-fixme} and $O(S\log S)$ for availability times where $n,m,S$ denote the number of nodes, edges and temporal edges respectively. The complexity for availability times can be obtained by analyzing the profile connection scan algorithm~\cite{csa-fixme}. It basically consists in a linear scan of temporal edges (called connections in that context) in appropriate order. The $\log S$ factor comes from the dichotomy search performed for each temporal edge in a sorted vector whose size is upper-bounded by $S$.''}

Interestingly, the profile problem associated to a specific source $s$ can be seen as a bi-criteria path problem when considering both the starting and the arrival time. Indeed, the profile function of a destination $u$ can be seen as a set of Pareto pairs $(a_t,-t)$ such that $a_t$ is the earliest arrival time in $u$ starting from $s$ at time $t$ (see Section~\ref{sec:applications} for a precise formulation of this statement). Inspired by this observation and by the rich literature on multi-criteria path problems in directed networks, which started at least at the beginning of the eighties~(see, for example, \cite{Hansen1980,Martins1984}), in this paper we extend the definition of a TDN in order to deal with a multiplicity of objectives while computing paths starting from a given source. To this aim, we integrate the definition of a TDN with an edge \textit{cost function}, a \textit{cost combination function}, and a \textit{cost total order}. The edge costs combine along a path according to the cost combination function, that is, the cost a path $\tpath P=\langle e_1,\ldots,e_k\rangle$ is equal to the cost of the sub-path $\langle e_1,\ldots,e_{k-1}\rangle$ combined with the cost of the edge $e_k$ (and by taking into account the arrival times). 

The problem we focus on in this paper is then the following one. We want to compute the set of Pareto optimal values of the paths from a given source starting at a given time to all the possible destinations, with respect to two criteria:  arrival time and cost. More precisely, given a time $t_0$ and two nodes $s$ and $d$, a path $\tpath P$, starting from $s$ at time $t_0$ and arriving to $d$, is \emph{Pareto $t_0$-optimal}, if there is no path $\tpath Q$ starting from $s$ at time no earlier than $t_0$ and arriving to $d$ such that the arrival time of $\tpath Q$ is smaller than the arrival time of $\tpath P$ and its cost is not greater (according to the cost total order) or the arrival time of $\tpath Q$ is no greater than the arrival time of $\tpath P$ and its cost is smaller. The (single-source) \textit{Pareto problem} can then be defined as follows: \textit{given a TDN, a cost function along with its cost combination function and its cost total order, a source node $s$, and a starting time $t_0$, compute, for each destination $d$, all pairs $(t,c)$ for which there exists a Pareto $t_0$-optimal path, starting from $s$ at time no earlier than $t_0$ and arriving to $d$, whose arrival time is $t$ and whose cost is $c$}.

This problem was implicitly considered in \cite{Mutzel2019} where the enumeration of all Pareto optimal paths in a point-availability TDN is considered. The first phase of their enumeration algorithm for min-cost earliest arrival paths solves the Pareto problem, as we stated it here, in $O(S^2)$ time. We improve over their algorithm in two ways. First, we obtain a much lower time complexity, that is, $O(S\log P)$, and second we identify the key algebraic property that allows us to generalize the framework to a large variety of cost definitions. The Pareto problem indeed appears to be a corner-stone problem for solving various minimum costs problems in TDNs.
%\comLV{Change what follows to take Mutzel into account.}
%As far as we know, this is the first time that this problem is considered in its generality (even if some
Some specific cases of this problem corresponding to specific edge cost functions have been considered in the literature~\cite{Orda1991,Xuan2003,Simard2019}. More precisely, we show how the Pareto problem can be solved in time $O(S\log P)$, whenever the cost functions satisfy a very natural property, called  \textit{isotonicity}, which is similar to the one used in~\cite{Sobrinho2005} while developing an algebraic approach for path-vector routing. Our main contributions are then the following (our results hold in the case in which there are no edges with delay equal to $0$ or the set of edges that have the same departure time and delay equal to $0$ do not induce any loop).

\begin{enumerate}
    \item If a cost function satisfies the isotonicity property, then the Pareto problem can be solved in time $O(S\log K)$, and in space $O(S)$ where $K$ is a Pareto complexity parameter satisfying $K\le P$ (see Section~\ref{sec:algorithm}). Hence, despite its generality, the Pareto problem can be solved, in the case of point-availability TDNs, with the same complexity of the profile problem, which is a special case  of the Pareto problem (see below).
    The first phase of the path-enumeration algorithm for min-cost earliest arrival paths proposed in~\cite{Mutzel2019} solves the Pareto problem in $O(S^2)$ time. This is the only and best complexity obtained prior to this paper as far as we know.
    
    \item The following path problems can be solved with the same time and space complexity.
    \begin{enumerate}
        \item Single source profile problem. As we already observed, the profile problem can be seen as a Pareto problem, where the cost function of an edge is the opposite of the departure time of the edge itself.
        We thus make explicitly its $O(S\log P)$ complexity while it was implicit in \cite{Dibbelt2013,Dibbelt2018,Wu2014Path,Wu2016,Kossinets2008Structure}. In the case of temporal networks (uniform delay 1 and integral times) the complexity is linear as $K=1$ in that case.
        
        \item Fewest hops. Finding the minimum number of edges required to reach each possible destination from a given source node. If $D=O(n)$ denotes the hop diameter (i.e. the maximum number of edges to reach a destination), a $O(Dm\log \max_{e\in E}|\alpha_e|)$-time algorithm was proposed in the context of time evolving graphs~\cite{Xuan2003} which can be seen as a particular case of constant-delay TDNs. The $O(S\log K)$ complexity we obtain for point-availability TDNs is better when the number of nodes is larger than the average number of events per edge which is the case in most practical networks.
        
        \item Shortest delay. Computing the paths with minimum duration defined as the sum of the delay of the edges.
        This problem was solved in~\cite{Wu2016} with similar $O(S\log P)$ complexity.
        
        \item Shortest fastest. Computing the paths with fewest hops among the paths with minimal total duration, defined as the difference between arriving and starting times of the path.
        This problem was introduced in the context of link streams~\cite{Latapy2018} that can be seen as constant-delay TDNs with uniform delay 0. An $O(n^2S^2\log S)$-time algorithm is proposed in \cite{Simard2019}. A variation of the algorithm is also proposed with similar complexity as our framework in a restricted model close to temporal networks. This variation does not seem to extend to point-availability TDNs.
        
  %      \item Highest reliability. Computing the paths with highest reliability, among the ones with earliest arrival time, where the reliability is defined as the product of the costs of the edges.

        \item Min/Max of Sum/Product/Min/Max general minimum cost paths. Our technique applies to many costs considered previously. In particular, all the single criteria considered in \cite{Hansen1980} for searching min or max cost when combining costs within a path with Sum, Product, Min and Max are covered by our approach (as long as costs are positive in the case of Product). Note that fewest hops and shortest delay correspond to MinSum where costs are 1 and delay of edges respectively.
        If edges are associated to a reliability corresponding to the probability of not failing, then a path with highest reliability corresponds to MaxProduct (assuming independence of edge failures). As another example, if edges correspond to road segments and their cost is their steepness, then MinMax corresponds to a path that encounters the least steepness which can be interesting for bike route planning.
        %Among the paths with earliest arrival time, computing the ones which minimise the maximum cost of its edges.
    \end{enumerate}
\end{enumerate}

All our results apply to directed point-availability TDNs. However, they can be easily adapted to undirected TDNs with no edges with zero delay.

%\subsection{Related work}

%An algorithm for finding a shortest fastest path in a link stream is proposed in~\cite{Simard2019}. It has worst-case complexity $O(n^2|T|^2\log |T| + nS)$ where $T$ is the set of times where a temporal edge begins or ends. Note that $|T|$ can be as large as $2S$ (we have $2m\le |T|\le 2S$). A simplified version of the algorithm is designed for temporal networks with complexity $O(n+S\log |T|)$.

%A polynomial-time-delay algorithm for enumerating all Pareto optimal temporal paths with respect to cost and arrival time is proposed in \cite{Mutzel2019}. The model considered is availability times with a cost associated to each temporal edge. The cost of a temporal path is the sum of the costs of its temporal edges. The Phase~1 of the algorithm computes all Pareto optimal pairs $(c,t_arr)$ where there exists a Pareto optimal path of cost $c$ arriving at time $t$ in $O(S^2)$ time.
%\com{LV: do we really say something about a paper with such bad complexity ?}

%\input{modelproblem}
\section{Definitions}
\label{sec:definitions}

As we said in the introduction, in this paper we focus our attention on point-availability TDNs. A \textit{point-availability time-dependent network} (in short, \textit{PATDN}) is a pair $\mathbb{G}=(V,\mathbb{E})$, where $V$ is the set of $n$ \textit{nodes} and $\mathbb{E}$ is the set of temporal edges. A \textit{temporal edge} $e$ is a quadruple $\tedge$, where $u\in V$ is the \textit{tail} of $e$, $v\in V$ is the \textit{head} of $e$, $\tau\in\mathbb{R}$ is the \textit{appearing time} of $e$, and $\delta\in\mathbb{R}$ is the \textit{delay} of $e$ (in the following, we will also refer to the \textit{arrival time} of $e$ defined as $\tau+\delta$). Note that the size of the network is $S=4|\mathbb{E}|$. Note also that this definition is slightly more general than the one given in the introduction, since we also allow the TDN to include edges with the same head, tail, and appearing time, but with different delay. We will allow temporal edges to have delay equal to $0$, but we will require that the set $\mathbb{E}_t$ of temporal edges that have the same appearing time $t$ and delay equal to $0$ do not induce any loop, that is, the \textit{zero $t$-snapshot graph} $G_t=(V,E_t)$ is a directed acyclic graph, where $(u,v)\in E_t$ if and only if $(u,v,t,0)\in\mathbb{E}_t$. Note that this property implies that the set $\mathbb{E}_t$ can be topologically ordered. In Section~\ref{subsec:zero}, we will see how we can relax this property in order to deal with more general cases.
Given a PATDN $\mathbb{G}=(V,\mathbb{E})$ and two nodes $u,v\in V$, a \textit{$\sd uv$ path} $\tpath P$ from $u$ to $v$  is a sequence of temporal edges $\langle e_1=\tindedge1,\ldots,e_k=\tindedge k\rangle\subseteq \mathbb{E}^k$ such that $u=u_1$, $v=v_k$, and, for each $i$ with $1<i\leq k$, $u_i=v_{i-1}$ and $\tau_i\geq\tau_{i-1}+\delta_{i-1}$. The \textit{starting time} of $\tpath P$ is defined as $\tau_1$, while the \textit{arrival time} $\alpha_{\tpath P}$ of $\tpath P$ is defined as $\tau_k+\delta_k$. As we said in the introduction, in this paper we deal with the \textit{unrestricted waiting} traversal policy, according to which it is possible to wait at a node as much as we want until some temporal edge appears and allows us to leave the node.

We now extend the definition of a PATDN in order to deal with a multiplicity of objectives while computing paths starting from a given source. To this aim, we integrate a PATDN $\mathbb{G}=(V,\mathbb{E})$ with a \textit{cost structure} $\mathcal{C}=(\cost,\gamma,\oplus\preceq)$ over $\mathbb{E}$, where $\cost$ is the set of possible \textit{cost values}, $\gamma$ is a \textit{cost function} $\gamma:\mathbb{E}\rightarrow\cost$, $\oplus$ is a \textit{cost combination function} $\oplus:\cost\times\cost\rightarrow\cost$, and $\preceq$ is a \textit{cost total order} $\preceq\ \subseteq\ \cost\times\cost$. For any path $\tpath P=\langle e_1,\ldots,e_k\rangle$, the \textit{cost function} of $\tpath P$ is recursively defined as follows: $\gamma_{\tpath P} = \gamma_{\langle e_1,e_2,\ldots,e_{k-1}\rangle} \oplus\gamma(e_k)$, with $\gamma_{\langle e_1\rangle}=\gamma(e_1)$ (in other words, the costs combine along the path according to the cost combination function).
Given a PATDN $\mathbb{G}=(V,\mathbb{E})$ with a cost structure $\mathcal{C}=(\cost,\gamma,\oplus\preceq)$ over $\mathbb{E}$, let $\thor$ denote the set of all real values $t$ such that there exists at least one temporal edge in $\mathbb{E}$ with appearing time or arrival time equal to $t$. We say that a pair $(t_1,c_1)\in\thor\times\cost$ \textit{dominates} a pair $(t_2,c_2)\in\thor\times\cost$ if $t_1 < t_2$ and $c_1\preceq c_2$, or $t_1\leq t_2$ and $c_1\prec c_2$ (the relation $\prec$ between the elements of $\cost$ is defined as $a \prec b$ if and only if $a \preceq b$ and $a \neq b$). Moreover, for any two nodes $u,v\in V$ and for any $t\in\mathbb{R}$, let $\allpaths uv(t)$ denotes the set of all $\sd uv$ paths whose starting time is no smaller than $t$.  Given a time $t_0\in\thor$, a path $\tpath P\in\allpaths sd(t_0)$ is \emph{Pareto $t_0$-optimal} among all paths in $\allpaths sd(t_0)$, if there is no path $Q\in\allpaths sd(t_0)$ such that $(\alpha_{\tpath Q},\gamma_{\tpath Q})$ dominates $(\alpha_{\tpath P},\gamma_{\tpath P})$. The problem we focus on in the rest of the paper is then the following one.

\problem{The Pareto problem}{Given a PATDN $N=(V,\mathbb{E})$ with a cost structure $\mathcal{C}=(\cost,\gamma,\oplus\preceq)$ over $\mathbb{E}$, a source node $s\in V$, and a starting time $t_0\in\mathbb{R}$, compute, for each destination $d\in V$, the set $\poset s{t_0}d$ containing all pairs $(t,c)\in\thor\times\cost$ for which there exists a Pareto $t_0$-optimal path $\tpath P\in\allpaths sd(t_0)$ such that $t=\alpha_{\tpath P}$ and $c=\gamma_{\tpath P}$.}

\section{Solving the Pareto problem in PATDNs}
\label{sec:algorithm}

Similarly to what has been observed in~\cite{Batz2012,Wu2016}, the prefix of a Pareto $t_0$-optimal path is not necessarily a Pareto $t_0$-optimal path. In order to deal with this problem, we assume that the cost structure $\mathcal{C}=(\cost,\gamma,\oplus\preceq)$ satisfies the following property.

\problem{Isotonicity property}{Let $c_1, c_2 \in\cost$ such that $c_1 \preceq c_2$. Then $c_1 \oplus c \preceq c_2 \oplus c$ for any $c \in \cost$.}

This property guarantees that, for any two paths $\tpath P_1$ and $\tpath P_2$ such that $\gamma_{\tpath P_1}\preceq\gamma_{\tpath P_2}$, and for each temporal edge $e$ that can be concatenated to both the paths $\tpath P_1$ and $\tpath P_2$, the cost of $\tpath P_2$ concatenated with $e$ is no better than the cost of $\tpath P_1$ concatenated with $e$. The isotonicity property also allows us to state the following lemma (in the following, two paths in $\tpath P_1$ and $\tpath P_2$ are said to be \textit{equivalent} if $\alpha_{\tpath P_1}=\alpha_{\tpath P_2}$ and $\gamma_{\tpath P_1}=\gamma_{\tpath P_2}$).

\begin{lemma}\label{lem:prefopt}
Let $N=(V,\mathbb{E})$ be a PATDN with a cost structure $\mathcal{C}=(\cost,\gamma,\oplus\preceq)$ over $\mathbb{E}$ satisfying the isotonicity property, let $s,d\in V$, and let $t_0\in\mathbb{R}$. Given a Pareto $t_0$-optimal path $\tpath P=\langle e_1,e_2, \dots,e_k\rangle\in\allpaths sd(t_0)$ with $k\geq2$, there exists a path $\tpath P' = \langle e_1',e_2',\dots,e_h'\rangle\in\allpaths sd(t_0)$ equivalent to $\tpath P$, such that its prefix $\langle e_1',e_2',\dots,e_{h-1}'\rangle$ is a Pareto $t_0$-optimal path. 
\end{lemma}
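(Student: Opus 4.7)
The plan is to argue that either the ``natural'' $(k-1)$-edge prefix $\tpath{P}_0 = \langle e_1,\dots,e_{k-1}\rangle$ of $\tpath{P}$ is already Pareto $t_0$-optimal---in which case we can simply take $\tpath{P}' = \tpath{P}$---or we can repeatedly replace it by a strictly dominating path without altering the overall (arrival time, cost) pair of the full path. Let $u_k$ denote the tail of $e_k$, so that $\tpath{P}_0 \in \allpaths{s}{u_k}(t_0)$, and suppose $\tpath{P}_0$ is not Pareto $t_0$-optimal. By definition there exists $\tpath{Q} \in \allpaths{s}{u_k}(t_0)$ whose pair $(\alpha_{\tpath{Q}}, \gamma_{\tpath{Q}})$ dominates $(\alpha_{\tpath{P}_0}, \gamma_{\tpath{P}_0})$; in particular $\alpha_{\tpath{Q}} \le \alpha_{\tpath{P}_0}$ and $\gamma_{\tpath{Q}} \preceq \gamma_{\tpath{P}_0}$.

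Since $\alpha_{\tpath{Q}} \le \alpha_{\tpath{P}_0} = \tau_{k-1} + \delta_{k-1} \le \tau_k$, the edge $e_k$ can be concatenated to $\tpath{Q}$, and the resulting path $\tpath{Q}\cdot e_k$ lies in $\allpaths{s}{d}(t_0)$ with arrival time $\tau_k + \delta_k = \alpha_{\tpath{P}}$. Applying isotonicity to $\gamma_{\tpath{Q}} \preceq \gamma_{\tpath{P}_0}$ yields $\gamma_{\tpath{Q}\cdot e_k} = \gamma_{\tpath{Q}} \oplus \gamma(e_k) \preceq \gamma_{\tpath{P}_0} \oplus \gamma(e_k) = \gamma_{\tpath{P}}$. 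Because $\tpath{P}$ is Pareto $t_0$-optimal and $\tpath{Q}\cdot e_k$ has the same arrival time as $\tpath{P}$, a strict inequality $\gamma_{\tpath{Q}\cdot e_k} \prec \gamma_{\tpath{P}}$ is impossible, so $\gamma_{\tpath{Q}\cdot e_k} = \gamma_{\tpath{P}}$ and $\tpath{Q}\cdot e_k$ is equivalent to $\tpath{P}$.

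If $\tpath{Q}$ is itself Pareto $t_0$-optimal in $\allpaths{s}{u_k}(t_0)$, we take $\tpath{P}' = \tpath{Q}\cdot e_k$ and are done; otherwise, we iterate the previous construction with $\tpath{Q}$ in the role of $\tpath{P}_0$. Each iteration strictly improves the pair $(\alpha, \gamma)$ at $u_k$ in the Pareto order (either the arrival time strictly decreases, or it stays the same while the cost strictly decreases), and at every step the current path, extended by $e_k$, remains equivalent to $\tpath{P}$. The main obstacle is to guarantee termination of this chain; I would argue it from finiteness of the set of pairs $(\alpha,\gamma)$ achievable by $s$-to-$u_k$ paths in a PATDN: arrival times lie in the finite set $\thor$, and under the paper's assumption that zero-delay edges with a common appearing time induce no loop, only finitely many distinct $s$-to-$u_k$ paths start at time $\ge t_0$, hence only finitely many costs arise. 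The chain therefore ends at some Pareto $t_0$-optimal $\tpath{Q}^\star \in \allpaths{s}{u_k}(t_0)$, and $\tpath{P}' = \tpath{Q}^\star \cdot e_k$ provides the required path: it is equivalent to $\tpath{P}$ and its $(h-1)$-edge prefix $\tpath{Q}^\star$ is Pareto $t_0$-optimal.
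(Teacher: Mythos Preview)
Your argument is correct, but it takes a different route from the paper's proof. The paper does not iterate: writing $e_k=(u,v,\tau,\delta)$, it directly selects, among all paths in $\allpaths{s}{u}(t_0)$ arriving no later than $\tau$, one of minimum cost, and among those one of minimum arrival time; this path $\tpath{Q}'$ is immediately Pareto $t_0$-optimal, and $\tpath{P}'=\tpath{Q}'\cdot e_k$ is shown to be equivalent to $\tpath{P}$ by exactly the same isotonicity/Pareto-optimality argument you give. So the paper replaces your descending chain by a single ``jump to the minimum''. Both approaches ultimately rest on the same finiteness fact (under the zero-delay DAG assumption each temporal edge occurs at most once in a path, so only finitely many $(\alpha,\gamma)$ pairs arise at $u$): you use it to terminate the chain, the paper uses it implicitly for the minimum-cost path to exist. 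Your version makes the role of finiteness explicit, which is a nice feature; the paper's version is shorter and sidesteps the whole termination discussion.
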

\begin{proof}
    Let $e_k = \tedge$ be the last temporal edge of $\tpath P$. Among all the paths $\tpath Q\in\allpaths su(t_0)$ with $\alpha_{\tpath Q}\leq \tau$, let us consider the set $\cal{M}$ of paths with minimum cost: hence, for any $\tpath M\in\cal M$, we have that $\gamma_{\tpath M}\preceq\gamma_{\langle e_1,e_2, \dots,e_{k-1}\rangle}$. Among the paths in $\cal{M}$, let $\tpath Q'=\langle e_1',e_2',\dots,e_{h'-1}\rangle$ be one with minimum arrival time. Note that $\tpath Q'$ is Pareto $t_0$-optimal among all paths in $\allpaths su(t_0)$. We define $\tpath P'$ as $\tpath Q'$ concatenated with $e_k$, that is, $\tpath P'=\langle e_1',e_2',\dots,e_{h'-1},e_k\rangle$. Clearly, $\alpha_{\tpath P'}=\tau+\delta=\alpha_{\tpath P}$. Moreover, since $\tpath P$ is Pareto $t_0$-optimal among all paths in $\allpaths sd(t_0)$, we have that $\gamma_{\tpath P}\preceq\gamma_{\tpath P'}$. Finally, by the isotonicity property, it follows that $\gamma_{\tpath P'}\preceq\gamma_{\tpath P}$. Hence, $\gamma_{\tpath P'}=\gamma_{\tpath P}$, which implies that $\tpath P$ and $\tpath P'$ are equivalent. Since the prefix of $\tpath P'$ is $\tpath Q'$ which is Pareto $t_0$-optimal, the lemma follows.
\end{proof}

We are now ready to describe our algorithm solving the Pareto problem, when the cost function satisfies the isotonicity property (see Algorithm~\ref{alg:paretoproblem}). To this aim, we assume that the set $\mathbb{E}$ of temporal edges with appearing time at least $t_0$ is ordered by increasing arrival time, prioritizing temporal edges with delay greater than $0$, and then topologically sorting the temporal edges that have delay $0$. At the beginning of the algorithm execution, all the Pareto sets are set to empty. For each scanned temporal edge $e$ in the PATDN, the algorithm updates the Pareto set of the head of $e$ by simply considering the cost of $e$ (if the tail of $e$ is the source node), and then updates the Pareto set of the head of $e$ by combining the cost of $e$ with the cost corresponding to the pair in the current Pareto set of the tail of $e$ with the greatest arrival time before the departure time of $e$. The update operation, with input a pair $(t,c)$, either simply adds the pair to the Pareto set (if this set is empty), or checks whether $c$ is smaller than the cost in the pair with greatest arrival time (recall that the temporal edges are sorted by increasing arrival time). In this latter case, it either simply adds the pair $(t,c)$ to the Pareto set (if the greatest arrival time is not equal to $t$) or substitutes the pair with greatest arrival time with the pair $(t,c)$ (since this latter pair dominates the one with greatest arrival time).

\begin{algorithm}[t]
\SetAlgoLined
\SetKwInOut{Input}{input}
\Input{An instance of the Pareto problem, in which the temporal edges of the PATDN are sorted as specified in the text}
\lForEach{$u \in V$}{
    $\poset s{t_0}u \gets \emptyset$
}
\ForEach{$e=\tedge$}{
    \lIf{$u = s$}{
        $\updateps (\poset s{t_0}v,\tau +\delta,\gamma(e))$ \label{alg:case1}
    }
    \If{$\poset s{t_0}u$ contains a pair $(t,c)$ with $t\leq\tau$}{\label{alg:case2a}
        let $(t^u, c^u)$ be the pair in $\poset s{t_0}u$ with greatest arrival time $t^u\leq\tau$\;\label{alg:costly}
        $\updateps (\poset s{t_0}v, \tau + \delta, c^u \oplus \gamma(e))$\;\label{alg:update2}
    }\label{alg:case2b}
}
\SetKwFunction{UpdatePS}{\updateps}
\SetKwProg{Fn}{Function}{:}{}
\Fn{\UpdatePS{$\mathbb{PO},t,c$}}{
    \lIf{$\mathbb{PO} = \emptyset$}{
        append $(t,c)$ to $\mathbb{PO}$
    }\Else{
        let $(t^*,c^*)$ be the pair in $\mathbb{PO}$ with greatest arrival time\;
        \If{$c \prec c^*$}{
            \If{$t = t^*$}{
                remove $(t^*,c^*)$ from $\mathbb{PO}$\;\label{alg:removal}
            }
            append $(t,c)$ to $\mathbb{PO}$\;
        }
    }
}
\caption{Computing, for each node $u$, the set $\poset s{t_0}u$}
\label{alg:paretoproblem}
\end{algorithm}

\begin{theorem}\label{thm:correctness}
For any instance of the Pareto problem, Algorithm~\ref{alg:paretoproblem} correctly computes, for any node $u$, the set $\poset s{t_0}u$.
\end{theorem}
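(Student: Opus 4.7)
The plan is to prove Theorem 1 by induction on the processing order of temporal edges, establishing the invariant that when we are about to process edge $e = \tedge$, the set $\poset s{t_0}u$ already contains exactly the Pareto $t_0$-optimal pairs $(t,c)$ for paths from $s$ to $u$ with arrival time $t \leq \tau$. The first step would be to formalize the ordering: since edges are processed by increasing arrival time with zero-delay edges broken by the topological order of their snapshot graph, when we process $e$ with appearing time $\tau$, every edge whose arrival time is strictly less than $\tau$, and every zero-delay edge with appearing time $\tau$ topologically preceding $e$, has already been processed. This is precisely what is needed for the invariant, since the last edge of any path from $s$ to $u$ ending by time $\tau$ falls in one of these two categories.

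For the inductive step I would show two directions separately. For \emph{soundness}, I would argue that every pair inserted by \updateps into $\poset s{t_0}v$ corresponds to some concrete $\sd sv$ path starting no earlier than $t_0$: either a one-edge path when $u=s$, or a path obtained by concatenating a witness path for the pair $(t^u,c^u)\in\poset s{t_0}u$ (which exists by the inductive hypothesis, and whose last arrival $t^u \leq \tau$ matches the starting constraint of $e$) with the edge $e$ itself; the resulting pair is $(\tau+\delta, c^u\oplus \gamma(e))$, exactly what the algorithm inserts. For \emph{completeness}, I would fix an arbitrary Pareto $t_0$-optimal path $\tpath P$ at $v$ with arrival time $\tau+\delta$ and show that $(\alpha_{\tpath P},\gamma_{\tpath P})$ is present in $\poset s{t_0}v$ when the algorithm terminates. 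If $\tpath P$ consists of the single edge $e$ from $s$, this is handled by line~\ref{alg:case1}. Otherwise, by Lemma~\ref{lem:prefopt}, there exists an equivalent path $\tpath P'$ whose prefix ends at $u$ at some time $\leq \tau$ and is Pareto $t_0$-optimal at $u$. By the inductive invariant, the corresponding pair $(t',c')$ belongs to $\poset s{t_0}u$ at the moment $e$ is processed, and by isotonicity the pair $(t^u,c^u)$ with greatest arrival time $t^u\leq \tau$ satisfies $c^u \preceq c'$, so $c^u\oplus\gamma(e) \preceq c'\oplus\gamma(e) = \gamma_{\tpath P}$; combined with the Pareto optimality of $\tpath P$, this forces equality, so the exact pair $(\alpha_{\tpath P},\gamma_{\tpath P})$ is passed to \updateps.

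The remaining piece is to verify that \updateps preserves the invariant that $\poset s{t_0}v$ is stored as an increasing-by-arrival-time list of mutually non-dominated pairs, and that no genuine Pareto pair is ever removed. The key observation here is that because edges are processed in increasing order of arrival time, every call to \updateps on $\poset s{t_0}v$ comes with a $t$ that is at least as large as any previously inserted arrival time. Thus we only need to compare against the last entry $(t^*,c^*)$: a new pair $(t,c)$ is discarded precisely when $c^* \preceq c$ (it would be dominated); when $c \prec c^*$ and $t = t^*$ the old pair is dominated by the new one and is replaced; in all other cases the two pairs are mutually incomparable and the list is simply extended. This local test suffices to maintain the global Pareto property.

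The main obstacle I anticipate is the tight coupling between the \emph{ordering} assumption and the \emph{isotonicity} assumption in justifying why choosing the unique pair with greatest $t^u \leq \tau$ in line~\ref{alg:costly} is enough, rather than having to scan all prefixes: one must argue that in a Pareto set of pairs $(t,c)$ maintained by \updateps, the cost component is monotone non-increasing (in the order $\preceq$) as $t$ increases, so the largest eligible $t^u$ automatically carries the best $c^u$, and isotonicity then propagates this optimality through the $\oplus\gamma(e)$ operation. A secondary delicate point is the zero-delay case: because several temporal edges can share the same arrival time $\tau+\delta=\tau$, one must carefully check that the topological order ensures that when edge $e$ is processed, the Pareto set of its tail $u$ has already been updated by all relevant preceding zero-delay edges, which is exactly what the acyclicity assumption on $G_t$ guarantees.
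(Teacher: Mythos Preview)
Your proposal is correct and follows essentially the same route as the paper: induction on the scan order, the observation that the ordering forces any newly created path to have $e$ as its last edge (with the $\delta>0$ / $\delta=0$ case split handled via the topological order on $G_t$), the appeal to Lemma~\ref{lem:prefopt} and isotonicity to justify that the single pair with greatest $t^u\le\tau$ suffices, and the check that \updateps maintains a dominance-free list. The only real difference is the invariant: the paper phrases it globally as ``after $k$ edges the Pareto sets are the correct answer for the restricted instance on $\mathbb{E}_k$'', whereas you phrase it locally as ``the tail's Pareto set is correct up to time $\tau$''. Your version, read literally, speaks only about the tail $u$ of the current edge and so is not self-sustaining under induction (you need it for the tail of the \emph{next} edge too), and it also does not literally say ``exactly'' since pairs with arrival time in $(\tau,\tau+\delta]$ may already sit in $\poset s{t_0}u$; but you clearly intend the global version and handle the extra pairs correctly when you search for $t^u\le\tau$, so this is a matter of wording rather than substance.
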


\begin{proof}
Let $\mathbb{E}_k\subseteq\mathbb{E}$ be the set of the temporal edges scanned after $k$ temporal edges have been scanned. We will prove, by induction on $k$, the following property: the Pareto sets after $k$ temporal edges have been scanned represent the solution to the instance of the Pareto problem in which the set of temporal edges of the PATDN is restricted to $\mathbb{E}_k$. The correctness of the algorithm will follow by taking $k=|\mathbb{E}|$.

The property is clearly satisfied for $k=0$: indeed, the Pareto sets are all empty because there are no paths at all when $\mathbb{E}_0 = \emptyset$, that is, there is no temporal edge. Now suppose that the property holds for $k\geq0$ and let us prove it for $k+1$. Let $e =\tedge$ be the $k+1$-th scanned temporal edge. Let us distinguish the following two cases.
\begin{itemize}
    \item $\delta>0$: since the temporal edges are ordered by increasing arrival time, there cannot be temporal edges in $\mathbb{E}_{k+1}$ that depart from $v$ at time $\tau+\delta$ or later. This because, if a temporal edge $f$ departs at time $\tau+\delta$ or later, then $f$ has greater arrival time than $e$, except in the case $f$ has delay $0$ and departs at time $\tau+\delta$. Because of the used order of the temporal edges, this case is not possible (among the temporal edges with same arrival time, the ones with delay greater than $0$ are scanned before). Thus the new paths created by using $e$ must have $e$ as their last temporal edge.
    \item $\delta=0$: similarly to the previous case, there cannot be temporal edges in $\mathbb{E}_{k+1}$ that depart from $v$ later than $\tau+\delta$. However, $\mathbb{E}_{k+1}$ can contain temporal edges that depart at time $\tau+\delta$ and have delay equal to $0$. Since we are scanning the temporal edges with the same arrival time and length equal to $0$ by their topological order, none of these temporal edges departs from $v$. Thus, again, the new paths created by using $e$ have $e$ as their last temporal edge.
\end{itemize}
In summary, the only Pareto set that might be updated while scanning the temporal edge $e$ is the Pareto set of $v$. Note that when we update a Pareto set by considering a new pair $(t,c)$, this pair cannot dominate any pair already present in the Pareto set but the one with greatest arrival time: indeed, all the other pairs have earlier arrival times, and thus cannot be dominated by $(t,c)$. Let $(t^*,c^*)$ be the pair in the Pareto set with greatest arrival time. The pair $(t,c)$ dominates the pair $(t^*,c^*)$ only if $t=t^*$ and $c\prec c^*$: in this case we need to remove $(t^*,c^*)$ from the Pareto set and add $(t,c)$. Whenever $t>t^*$ but $c\prec c^*$, we have that neither $(t^*,c^*)$ dominates $(t,c)$ nor $(t,c)$ dominates $(t^*,c^*)$: in this case, the pair $(t,c)$ has to be added to the Pareto set (without removing the pair $(t^*,c^*)$). This is exactly what is done by the function \updateps\ in Algorithm~\ref{alg:paretoproblem}, where we also consider the case in which the Pareto set is empty (in this case, the pair $(t,c)$ is simply added to the Pareto set).

Let us now consider first the case in which the temporal edge $e$ is starting a new path $\tpath P=\langle e\rangle$, that is, the tail of $e$ is the source $s$. In this case, the pair $(\tau+\delta,\gamma(e))$ is a potential candidate to become a member of the Pareto set of $v$. For this reason, Algorithm~\ref{alg:paretoproblem} at line~\ref{alg:case1} invokes the function \updateps\ with arguments the Pareto set of $v$, $\tau+\delta$, and $\gamma(e)$.

It remains to consider the case in which $e$ extends a previous Pareto $t_0$-optimal path from $s$ to $u$. By the induction hypothesis, all the pairs corresponding to the Pareto $t_0$-optimal paths from $s$ to $u$, that can be concatenated with $e$, are already included in the Pareto set of $u$. Because of Lemma \ref{lem:prefopt}, in order to update the Pareto set of $v$, we just need to examine the Pareto set of $u$. Moreover, we are just interested in the paths that arrive in $u$ no later than $\tau$, so that adding $e$ to any of these paths produces new valid paths. Among those paths, we only have to consider the one with lowest cost, since adding $e$ to them will always produce a path arriving at the same time $\tau+\delta$. Let $\tpath P\in\allpaths su(t_0)$ be such a path, and let $t^u$ and $c^u$ be its arrival time and its cost, respectively. Note that, as we consider only Pareto $t_0$-optimal paths, $\tpath P$ is also the one having greatest arrival time among those arriving no later than $\tau$. The isotonicity property guarantees that $\tpath P$ produces the path with better cost after concatenation with the temporal edge $e$. Using the function $\oplus$, we can compute the cost $c^u\oplus\gamma(e)$ of $\tpath P$ concatenated with $e$: the pair $(\tau+\delta,c^u\oplus\gamma(e))$ is then a potential candidate to become a member of the Pareto set of $v$. This process is exactly what is done by Algorithm~\ref{alg:paretoproblem} at lines~\ref{alg:case2a}-\ref{alg:case2b}.

We have thus proved that Algorithm~\ref{alg:paretoproblem} correctly updates the Pareto set of $v$ when analysing the $k+1$-th scanned temporal edge, and thus proved the inductive step. The theorem thus follows.
\end{proof}

In order to analyse the complexity of Algorithm~\ref{alg:paretoproblem}, we introduce the following parameters of a PATDN $N$. For each temporal edge $e=\tedge$, the \textit{Pareto complexity} $K_e$ of $e$ is defined as the number of pairs $(t,c)$ in $\poset s{t_0}u$ such that $t\in(\tau,\tau+\delta]$. The Pareto complexity of $N$ is defined as $K = \max_{e \in \mathbb{E}}K_e$.

\begin{theorem}\label{thm:complexity}
With input any instance of the Pareto problem, Algorithm~\ref{alg:paretoproblem} executes in time $O(|\mathbb{E}|\log K)$ and in space $O(|\mathbb{E}|)$.
\end{theorem}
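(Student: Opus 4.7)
The plan is to store each $\poset{s}{t_0}{u}$ as a dynamic array of pairs in increasing order of arrival time, supporting $O(1)$ amortized read/append/remove at the tail together with finger (exponential) search from the tail in $O(\log d)$ time, where $d$ is the number of positions traversed. With this representation, every invocation of \updateps\ inspects only the tail pair and performs a constant number of tail-end operations, and so runs in $O(1)$ amortized time. Since at most two \updateps\ calls happen per scanned temporal edge, their aggregate contribution to the running time is $O(|\mathbb{E}|)$.

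The remaining per-edge work lies at line~\ref{alg:costly}, where, for an edge $\tedge$, the algorithm must locate the pair $(t^u,c^u)\in\poset{s}{t_0}{u}$ of largest arrival time at most $\tau$. I would implement this by exponential search from the tail: probe the positions at rank distances $1,2,4,\ldots$ from the last pair until reaching one whose arrival time is at most $\tau$, then binary-search inside the bracketed interval. This costs $O(\log d_e)$, where $d_e$ is the number of pairs currently stored in $\poset{s}{t_0}{u}$ whose arrival time is strictly greater than $\tau$. The crux of the proof is then to show $d_e\le K_e$, which charges only $O(\log K)$ to edge $e$.

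To establish $d_e\le K_e$ I would rely on two observations. First, because the edges are scanned in non-decreasing order of arrival time, every pair currently stored in $\poset{s}{t_0}{u}$ was inserted while processing some edge with arrival time at most $\tau+\delta$, so the $d_e$ counted pairs all have arrival times in the half-open interval $(\tau,\tau+\delta]$. Second, the set of arrival-time values ever present in any $\poset{s}{t_0}{u}$ is monotonically non-decreasing over the execution: inspection of \updateps\ shows that a pair is removed only at line~\ref{alg:removal}, and only when it is immediately replaced by a new pair with the same arrival time, while non-tail pairs are never modified. Combining the two observations, each of the $d_e$ arrival times currently lying in $(\tau,\tau+\delta]$ must also appear in the final $\poset{s}{t_0}{u}$, so $d_e\le K_e\le K$. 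This persistence-of-arrival-times argument is the step I expect to need the most care, since it is what converts a raw finger-search distance into the advertised Pareto-complexity parameter.

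Summing everything, the total running time is $O(|\mathbb{E}|) + \sum_{e\in\mathbb{E}} O(\log K) = O(|\mathbb{E}|\log K)$. For the space bound, I would note that every pair stored in any Pareto set is introduced by exactly one invocation of \updateps, of which there are at most $2|\mathbb{E}|$; adding the linear-size overhead of the dynamic arrays yields an overall space usage of $O(|\mathbb{E}|)$.
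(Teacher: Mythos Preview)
Your argument is correct and follows essentially the same approach as the paper: represent each Pareto set as a list sorted by arrival time, observe that \updateps\ is $O(1)$, and implement line~\ref{alg:costly} by exponential-plus-binary search from the tail at cost $O(\log K_e)$. Your ``persistence of arrival times'' argument for $d_e\le K_e$ is in fact more explicit than the paper's, which simply asserts that $(t^u,c^u)$ sits before the last $K_e$ elements of the current list without spelling out why the current count of pairs in $(\tau,\tau+\delta]$ is bounded by the final one.
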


\begin{proof}
We assume that the temporal edges are already ordered as described above, and that the $\oplus$ and $\preceq$ operations require constant time. Moreover, note that the removal operations, that are possibly executed at line~\ref{alg:removal} of Algorithm~\ref{alg:paretoproblem}, require only to remove the last element of a list (since it turns out that the lists are ordered with respect to the arrival time values): thus each of these operations requires constant time.
The algorithm performs $|\mathbb{E}|$ iterations, one for each temporal edge $e=\tedge$. For each iteration, the only operation that does not require constant time is finding in $\poset s{t_0}u$ the pair $(t^u,c^u)$ with greatest arrival time $t^u\leq\tau$ (see line~\ref{alg:costly} of Algorithm~\ref{alg:paretoproblem}). We will now give a bound on the time complexity of this operation. From the definition of the Pareto complexity of a temporal edge, we have that $(t^u,c^u)$ is located exactly before the last $K_e$ elements of the current $\poset s{t_0}u$. We can then look for it in the following way. Let $p$ be the size of the current $\poset s{t_0}u$. We then look at the pairs in position $p$, $p-1$, $p-2$, $p-4$, $p-8$, \dots, until we find, in a certain position $p-k$ (after $O(\log k)$ steps), a pair with arrival time less than or equal to $\tau$. The pair we are looking for is now in a position between $p-k$ and $p-k/2$: by using a binary search technique, we can find it in $O(\log k)$ iterations. Since $k\leq 2K_e$, the total cost to perform these operations is $O(\log K_e)$. We can then conclude that the time complexity of the algorithm is $O(|\mathbb{E}|\log K)$.
For what concerns the space complexity, during each iteration the algorithm adds at most two pairs to a Pareto set if the tail of the scanned temporal edge is the source $s$, and at most one pair in the other cases. This guarantees that the total number of pairs in the Pareto sets is bounded by $2|\mathbb{E}|$. Thus the algorithm executes in space $O(|\mathbb{E}|)$, and the theorem follows.
\end{proof}

\subsection{Extending the algorithm to multiple cost structures}\label{sec:multicost}

Given a PATDN $N=(V,\mathbb{E})$, let us consider $h$ cost structures $\mathcal{C}_1,\dots,\mathcal{C}_h$ over $\mathbb{E}$, with each $\mathcal{C}_i=(\cost_i,\gamma_i,\oplus_i,\preceq_i)$ satisfying the isotonicity property. Let $\cost = \cost_1\times\cdots\times\cost_h$. We define a global cost function $\gamma:\mathbb{E}\rightarrow\cost$ as follows: for each $e\in\mathbb{E}$, $\gamma(e)=(\gamma_1(e),\ldots,\gamma_h(e))$. We also define a global cost combination function $\oplus : \cost\times\cost\longrightarrow\cost$ as follows: given $c_1=(c_{1,1}, \dots,c_{1,h}), c_2=(c_{2,1},\dots,c_{2,h})\in\cost$, $c_1 \oplus c_2 = (c_{1,1} \oplus_1 c_{2,1}, c_{1,2} \oplus_2 c_{2,2}, \dots, c_{1,h} \oplus_h c_{2,h})$. Finally, we define a global cost total lexicographical order $\preceq\subseteq\cost\times\cost$ as follows: given $c_1=(c_{1,1}, \dots,c_{1,h}), c_2=(c_{2,1},\dots,c_{2,h})\in\cost$, $c_1 \preceq c_2$ if and only if $c_1 = c_2$ or $c_{1,i}\prec_i c_{2,i}$ for the first index $i$ such that $c_{1,i}\neq c_{2,i}$, where $\prec_i$ is the strict order on $\cost_i$ induced by $\preceq_i$. It is easy to verify that the cost structures $\mathcal{C}=(\cost,\gamma,\oplus,\preceq)$ over $\mathbb{E}$ satisfies the isotonicity property. The global cost of a path is defined similarly to what we have done with one cost function. The definition of multi-criteria Pareto $t_0$-optimal paths and of the Pareto problem are also similar. We can now use Algorithm~\ref{alg:paretoproblem} to solve the multi-criteria Pareto problem. In this way we manage to combine a plurality of costs and prioritize paths with respect to different criteria, provided that an order of importance between them is given. Note that this approach is  different from looking for the Pareto sets with respect to $h$ different costs of equal importance.

\subsection{Relaxing the constraint on temporal edges with zero delay}
\label{subsec:zero}

In the description of Algorithm~\ref{alg:paretoproblem} we have assumed that the zero $t$-snapshot graph $G_t$ is a directed acyclic graph. In this section, we show how this hypothesis can be further relaxed in order to deal with more general cases, whenever the $\oplus$ operation is associative. Let $N=(V,\mathbb{E})$ be a PATDN with a cost structure $\mathcal{C}=(\cost,\gamma,\oplus\preceq)$ over $\mathbb{E}$, and let us consider a weighted version of the graph $G_t$ in which the weight $w(e)\in\cost$ of an edge $e=(u,v)$ is equal to $\gamma((u,v,t,0))$. We say that a strongly connected component $C$ of $G_t$ is \emph{cost transitively closed} if, for each pair of edges $e_1=(u_1,u_2)$ and $e_2=(u_2,u_3)$ in $C$, there always exists an edge $e_3=(u_1,u_3)$ in $C$ with $w(e_3)\preceq w(e_1)\oplus w(e_2)$. The PATDN $N$ is said to be \emph{zero transitively closed} if, for each $t\in\thor$, all strongly connected components of $G_t$ are cost transitively closed. Algorithm~\ref{alg:paretoproblem} can then be adapted to take as input any zero transitively closed PATDN by refining the order in which the temporal edges in $E$ are scanned: we additionally require that the edges with the same arrival time $t$ and having delay zero, i.e. those that correspond to edges in $G_t$, are given according to a topological order of the strongly connected components of $G_t$. The correctness follows from the fact that the cost transitively closed hypothesis allows us to consider only paths with at most one edge in any strongly connected component of any $G_t$, as any other path is dominated by such a path.

% \com{LV: the zero transitively closed property is not necessary when the cost of zero-delay edges is 0...}

\subsection{Finding Pareto optimal paths}

Algorithm~\ref{alg:paretoproblem} computes the value of the arrival time and of the cost of the Pareto $t_0$-optimal paths. In this section we describe how to compute, for each pair of these values, a corresponding Pareto $t_0$-optimal path. We proceed as follows. To each value $(t,c)$ in a Pareto set $\poset s{t_0}v$, we will associate two pointers during the execution of the algorithm. Consider the iteration of the algorithm during which $(t,c)$ is added to $\poset s{t_0}v$, and let $e=\tedge$ be the edge which caused this update. We then associate to $(t,c)$ a pointer $\pi_1$ to $e$. Concerning the second pointer $\pi_2$, if $(t,c)$ is added to $\poset s{t_0}v$ at line \ref{alg:update2}, we associate to $(t,c)$ a pointer to $(t^u,c^u)\in\poset s{t_0}u$, otherwise we associate to $(t,c)$ a null pointer. In order to extract a path $\tpath P$ corresponding to a pair $(t,c)\in\poset s{t_0}v$, we can recursively proceed backwards, from its last edge to the first one, as follows:
\[
p(x,y) = \left\{\begin{array}{ll}
p(\pi_2(x,y)) \mbox{ concatenated with $\pi_1(x,y)$} & \mbox{if $\pi_2(x,y)$ is not null,}\\
\langle\pi_1(x,y)\rangle & \mbox{otherwise.}\\
\end{array}\right.
\]
The Pareto $t_0$-optimal path $\tpath P$ corresponding to the pair $(t,c)\in\poset s{t_0}v$ can then be computed as $p(t,c)$.

\section{Applications}
\label{sec:applications}

In this chapter we will show several possible applications of Algorithm~\ref{alg:paretoproblem}. To this aim, for each application, we will specify the cost structure to be used: it is easy to verify that each cost structure satisfies the isotonicity property. In the following, we assume that $t_0$ is any fixed time instant in $\thor$, and that the paths are all starting no earlier than $t_0$.

% \begin{description}
\noindent\textbf{Profile problem:} compute, for each destination $u$, its profile function, which associates, to any starting time $t$, the earliest arrival time in $u$, if we start from $s$ at time $t$. In order to solve this problem, we use the following cost structure: (a) $\cost = \mathbb{R}$, (b) for each temporal edge $e = (u,v,\tau,\delta)$, $\gamma (e) = \tau$, (c) for any two real numbers $a$ and $b$, $a\oplus b = a$, and (d) for any two real numbers $a$ and $b$, $a\preceq b$ if and only if $a \geq b$. Note that according to this cost structure, the cost of a path is equal to its starting time. The Pareto set $\poset s{t_0}u$ allows us to compute the profile function of $u$ because of the following reason. If we consider two consecutive pairs $(a_1,s_1)$ and $(a_2,s_2)$ in $\poset s{t_0}u$, we have that $a_1 < a_2$ and $s_1 < s_2$ because of the Pareto optimality. Hence, for any starting time $t \in (s_1, s_2]$, we can deduce that the earliest arrival time in $u$ is $a_2$, since $\poset s{t_0}u$ contains all Pareto optimal pairs and no pair can have departure time in-between $s_1$ and $s_2$.
    
\noindent\textbf{Fewest hops:} compute, for each destination $u$, the minimum number of edges of a path from $s$ to $u$. In order to solve this problem, we use the following cost structure: (a) $\cost = \mathbb{N}$, (b) for each temporal edge $e$, $\gamma (e) = 1$, (c) for any two natural numbers $a$ and $b$, $a \oplus b = a+b$, and (d) for any two natural numbers $a$ and $b$, $a\preceq b$ if and only if $a \leq b$. Note that according to this cost structure, the cost of a path $\tpath P = \langle e_1,e_2, \dots, e_k \rangle$ is equal to $k$. Hence, to obtain the minimum number of edges needed by a path to reach a node $u$ from the source $s$, it suffices to look at the cost of the last pair in $\poset s{t_0}u$.
	
\noindent\textbf{Shortest delay:} compute, for each destination $u$, the minimum delay of a path from $s$ to $u$, where the delay of a path is the sum of the delays of its temporal edges. In order to solve this problem, we use the following cost structure: (a) $\cost = \mathbb{R}^+$, (b) for each temporal edge $e = (u,v,\tau,\delta)$, $\gamma (e) = \delta$, (c) for any two real numbers $a$ and $b$, $a \oplus b = a+b$, and (d) for any two real numbers $a$ and $b$, $a\preceq b$ if and only if $a \leq b$. Note that according to this cost structure, the cost of a path is the sum of the delays of its temporal edges. Hence, to obtain the cost of the shortest delay path from the source $s$ to each node $u$, it suffices to look at the cost of the last pair in $\poset s{t_0}u$.
	
\noindent\textbf{Shortest fastest:} compute, for each destination $u$, the minimum number of edges of a path from $s$ to $u$, among all the paths with minimal duration, where the duration of a path is defined as the difference between its arriving and starting times. In order to solve this problem, we use the following two cost structures: (a) $\cost_1 = \mathbb{R}$ and $\cost_2 = \mathbb{N}$, (b) for each temporal edge $e = (u,v,\tau,\delta)$, $\gamma_1(e) = \tau$ and $\gamma_2(e) = 1$, (c) for any two real numbers $a$ and $b$, $a\oplus_1 b = a$ and $a \oplus_2 b = a+b$, and (d) for any two real numbers $a$ and $b$, $a\preceq_1 b$ if and only if $a \geq b$ and $a\preceq_2 b$ if and only if $a \leq b$. Note that according to cost structure $\cost_1$, the cost of a path is 
%the sum of the delays of its temporal edges. 
its starting time (latest being preferred).
Combining these cost structures $\cost_1$ and $\cost_2$ as explained in \ref{sec:multicost} allows us to compute the values of the Pareto optimal paths with respect to arrival time and a cost that has the departure time in the first component and the number of hops in the second one. For each Pareto set, it hence suffices to extract the pair $(a, (d,h))$ such that $a-d$ is minimal among all the the pairs in the Pareto set. 

\noindent\textbf{MaxProd${}^+$:} compute, for each destination $u$, the maximum value of a path from $s$ to $u$, where the value of a path is defined as the product of the costs of its temporal edges. In order to solve this problem, we use the following cost structure: (a) $\cost = \mathbb{R}^+$, (b) $\gamma$ can be any function, (c) for any two real numbers $a$ and $b$, $a \oplus b = a\cdot b$, and (d) for any two real numbers $a$ and $b$, $a\preceq b$ if and only if $b \geq a$. Note that according to this cost structure, the cost of a path is the product of the costs of its temporal edges. Hence, to obtain the cost of the maximum value path from the source $s$ to each node $u$, it suffices to look at the cost of the last pair in $\poset s{t_0}u$.
	
\noindent\textbf{MinMax:} compute, for each destination $u$, the minimum requirement of a path from $s$ to $u$, where the requirement of a path is defined as the maximum of the costs of its temporal edges.  In order to solve this problem, we use the following cost structure: (a) $\cost = \mathbb{R}$, (b) $\gamma$ can be any function, (c) for any two real numbers $a$ and $b$, $a \oplus b = \max(a,b)$, and (d) for any two real numbers $a$ and $b$, $a\preceq b$ if and only if $a \leq b$. Note that according to this cost structure, the cost of a path is the maximum of the costs of its temporal edges. Hence, to obtain the cost of the minimum value path from the source $s$ to each node $u$, it suffices to look at the cost of the last pair in $\poset s{t_0}u$. Note that it is also possible to solve, in an analogous way, any combination of the minimum or maximum selection with the sum, product, minimum, or maximum composition.

\section{Conclusion and open questions}
\label{sec:conclusion}

We have described and analysed a general algorithm for solving the Pareto problem in PATDN, which significantly improves the time complexity of the previously known solution, and which can be used to solve several different minimum cost path problems in PATDN with a vast variety of cost definitions. Even if the Pareto problem we considered is defined as a one-to-all path problem, our algorithm can be easily adapted in order to deal with all-to-one path problems. This can be obtained by making the latest starting time play the role of the earliest arrival time, by basically scanning the edges in reverse order, and by assuming a symmetric version of the isotonicity property. As we already said in the introduction, our algorithm adapts to undirected TDNs with no temporal edges with zero delay.

It would be interesting to consider the complexity of the Pareto problem in the case of TDNs at higher levels of the hierarchy described in the introduction. Moreover, in our formulation of the Pareto problem the starting time is fixed, and it would be interesting to consider the problem of computing the Pareto optimal function, returning for each time instant the corresponding Pareto sets, by providing a solution faster than the obvious one consisting of applying our algorithm for each possible time instant. Finally, we think that it is worth exploring different version of the Pareto problem in which the arrival time is substituted by some other criteria, such as, for example, the duration of a path.
%%
%% Bibliography
%%

%% Please use bibtex, 
% \newpage
\bibliography{biblio}

\newpage

\appendix

\end{document}